\newtheorem{theorem}{Theorem}
 \definecolor{BLACK}{gray}{0}
 \definecolor{WHITE}{gray}{1}
 \definecolor{RED}{rgb}{1,0,0}
 \definecolor{GREEN}{rgb}{0,1,0}
 \definecolor{BLUE}{rgb}{0,0,1}
 \definecolor{CYAN}{cmyk}{1,0,0,0}
 \definecolor{MAGENTA}{cmyk}{0,1,0,0}
 \definecolor{YELLOW}{cmyk}{0,0,1,0}
\begin{document}

\title{Asymptotic Cosmological Behavior of Scalar-Torsion Mode in Poincar\'{e} Gauge Theory }

\author{Chao-Qiang Geng$^{1,2,3}$\footnote{geng@phys.nthu.edu.tw},
Chung-Chi Lee$^2$\footnote{g9522545@oz.nthu.edu.tw} and
Huan-Hsin Tseng$^2$\footnote{d943335@oz.nthu.edu.tw}
}

\affiliation{
$^1$College of Mathematics \& Physics, Chongqing University of Posts \& Telecommunications, Chongqing, 400065, China
\\
$^2$Department of Physics, National Tsing Hua University, Hsinchu, Taiwan 300
\\
$^3$Physics Division, National Center for Theoretical Sciences, Hsinchu, Taiwan 300
}

\date{\today}

\begin{abstract}
We study the cosmological effect of  the simple scalar-torsion ($0^+$) mode in Poincar\'{e} gauge theory
of gravity. We find that for the non-constant (affine) curvature case, the early evolution of the torsion density 
$\rho_T$ has a radiation-like asymptotic behavior of $a^{-4}$ with $a$ representing the scale factor, along 
with the stable point of the torsion pressure ($P_T$) and density ratio $P_T/\rho_T\rightarrow 1/3$ in the high redshift regime 
$(z \gg 0)$, which is different from the previous result in the literature. 
We use the Laurent expansion to resolve the solution. We also illustrate our result by the execution 
of numerical computations.
\end{abstract}

\maketitle

\section{Introduction}\label{sec:introduction}
The recent cosmological observations, such as those from type Ia
supernovae~\cite{obs1, obs11}, cosmic microwave background
radiation~\cite{obs12, arXiv:1001.4538}, large scale
structure~\cite{astro-ph/0501171, obs13} and weak
lensing~\cite{astro-ph/0306046}, reveal that our universe is subject
to a period of acceleration. In general,
there are two ways to explain the phenomenon of the late-time accelerating universe~\cite{DE}
either by modifying the left- and right-handed sides of Einstein equation,
called modified  gravity and modified matter theories, respectively.
 For modified gravity theories, the acceleration is accounted as
 a part of the gravitational effect, while modified matter theories
  are constructed by including some negative pressure matter that could result in
the expanding effect.
In this study, we adopt the view point of
an alternative gravity theory by selecting  so-called
Poincar\'{e} gauge theory (PGT)~\cite{Hehl:1976kj, Obukhov:1987tz,Hehl:1994ue},
which is also suitable to describe the late-time accelerating
behavior~\cite{Tseng:2012hz, Shie:2008ms}.

PGT is under the consideration of gauging
Poincar\'{e} group $P_4 = \mathbb{R}^{1,3} \rtimes O(1,3)$ for
gravity,
and it sets out from a Riemann-Cartan spacetime $(M,g,\nabla)$,
where $M$ is a differentiable manifold, $g$ is a metric on
$M$, and
$\nabla$ is a general affine metric-compatible  connection
with $\nabla g\equiv 0$
so that it has a canonical decomposition into
$\nabla = \overline{\nabla} + K$ with $\overline{\nabla}$  the
Riemannian part and $K$ the contortion tensor written in
terms of the torsion tensor $T$ of $\nabla$. As a result, PGT is in
general a theory of gravitation~\cite{Hehl:1976kj,Obukhov:1987tz} with torsion,
which  couples to the spin source. The theory comprises a degenerate case of
the Einstein's general relativity (GR) of the vanishing  torsion
$T$ and the Einstein-Cartan theory~\cite{Trautman:2006fp} with a
torsion field equation (TFE) algebraically coupled to the intrinsic
spin of the source, resulting in a non-dynamical torsion field.

In this work, we concentrate on a specific quadratic theory
with only the scalar-torsion mode
in PGT, which  possesses the dynamical torsion
field.
This particular mode is the \emph{simple $0^+$ mode}, which is
one of the six modes:
$0^{\pm},1^{\pm}$ and $2^{\pm}$  labeled by spin and parity, based on the linearized
theory~\cite{Hayashi:1981mm,Sezgin:1979zf}.
We remark that  the $0^-$ mode  
interacts with intrinsic spins of fermions~\cite{Shie:2008ms,Chen:2009at}. 
However, its contribution is considered to
have largely diminished from the early universe to the present time
so that its effect in the current stage must be slight. On the other
hand, since the $0^+$ mode has no interaction with any
fundamental source~\cite{Shie:2008ms,Kopczynski}, one could
imagine that it remains to have a considerable portion within the current
universe.
Consequently, this mode naturally becomes the subject to study~\cite{Hehl2012}.

In view of the FLRW  cosmology, the vanishing spin current 
of the scalar-torsion mode renders a
set of nonlinear equations which address the evolutions of the
metric and torsion field. Under the positivity energy argument,
Shie, Nester and Yo (SNY) in~\cite{Shie:2008ms} have observed two
separate cases: one has a constant affine curvature ($R \equiv$
const) which violates the positive energy condition; and the other
subject to the condition forms a system of nonlinear ordinary
differential equations (ODEs). In the former, it has a late time de
Sitter space asymptote~\cite{Ao:2010mg} yet the torsion energy
density could be negative, and the equation of state (EoS) of the
torsion field has an interesting behavior~\cite{Tseng:2012hz}. In
the latter, no obvious analytic solution is found so that numerical
methods are generally applied. In particular, one of interesting
features in our previous study~\cite{Tseng:2012hz}
for  the latter case  is that the torsion EoS  has an
asymptotic behavior in the high redshift regime, whereas some other
studies in the literature~\cite{Shie:2008ms,Ao:2010mg} point out that the affine
curvature $R$, torsion scalar $\Phi$ and  Hubble parameter $H=
\dot{a}/a$ are oscillatory during the cosmological evolution.
Since the oscillating behaviors do not  appear in our
work~\cite{Tseng:2012hz}, a thorough study is
 clearly needed. In
this paper, we find a proof from a semi-analytical solution in the
large curvature regime to support our non-oscillatory result.
In order to demonstrate the conformity with our semi-analytical
solution, we will also present the numerical analysis.


\section{Scalar-Torsion Cosmology}\label{TorsionCosmology}

\subsection{Formulation}

In this note, we explore a specific scalar-torsion mode in PGT called the \emph{simple
$0^+$ mode},
given by~\cite{Shie:2008ms}
\begin{equation}\label{E:SNY}
L_{\text{SNY}} = \frac{a_0}{2} R + \frac{b}{24} R^2 + \frac{a_1}{8}
\left( T_{ijk} \, T^{ijk} + 2 T_{ijk} \, T^{kji} - 4 T_k \, T^k
\right),
\end{equation}
with the positive coefficients of $a_{0,1}$ and $b$
are required by the positivity energy argument.
Under the FLRW metric, 
the field equations have been shown in Eqs.~(2.11)-(2.13) in Ref.~\cite{Tseng:2012hz}.
The Friedmann equations for the
scalar-torsion mode are given by
\begin{eqnarray}
H^2 &\equiv & \frac{\rho_c}{3a_0} \equiv{\rho_M + \rho_T\over 3a_0} ,\nonumber \\
 \dot{H} &\equiv & -\frac{\rho_c + p_{tot}}{2 a_0}\equiv -{\rho_c + p_M + p_T \over 2 a_0} ,
\label{E:H2}
\end{eqnarray}
where $a_0 = (8\pi G)^{-1}$ in GR, the subscript $M$ represents the
ordinary matter including both dust $(m)$ and radiation
$(r)$,
and $\rho_T$ and  $p_T$ correspond to the torsion density and pressure of the effective geometric
effect other than GR, defined as
\begin{eqnarray}
\label{E:rho,p_T}
\rho_T &=&  3\mu H^2 - \frac{b}{18} \left( R + \frac{6\mu}{b}
\right) (3H - \Phi )^2 + \frac{b}{24} R^2, \nonumber\\
p_T &=& \frac{\mu}{3}( R - \bar{R} ) + {\rho_T \over 3}\,,
\end{eqnarray}
respectively. From (\ref{E:rho,p_T}), we can discuss the torsion EoS, $w_T$, defined by~\cite{Tseng:2012hz}
\begin{equation}\label{E:w_T}
w_T = \frac{p_T}{\rho_T}\,.
\end{equation}


\subsection{Semi-analytical solution in high redshift}\label{sec:AnalyticalSolution}

In this context, we provide the semi-analytical solution of the
positive energy scalar-torsion mode in the large scalar affine
curvature limit $R \gg 6\mu/b$ which is commonly achieved in the
high redshift regime $(a \ll 1)$. In such situation, we  write
the energy density of ordinary matter and torsion in series
expansion of $a(t)$  as
\begin{eqnarray}
\label{E:rho_exp}
\rho_M &=& \frac{\rho_m^{(0)}}{a^3} +  \frac{\rho_r^{(0)}}{a^4},
\nonumber\\
\frac{\rho_T}{\rho_m^{(0)}} &= &\sum_{k=-c}^{\infty} A_{-k} \, a^k\,,
\end{eqnarray}
respectively.

Before the analysis, we first follow the rescaling of the parameters
in~\cite{Tseng:2012hz},
\begin{eqnarray}
\label{eq:norescaling} \tilde{a}_0 &=& a_0/m^2b, \quad \tilde{a}_1
=a_1/m^2b, \quad
\tilde{t} =t \cdot m, \quad  \tilde{\mu} =\tilde{a}_0 + \tilde{a}_1, \nonumber \\
\tilde{H}^2& =&H^2/m^2, \quad \tilde{\Phi} =\Phi/m, \quad
\tilde{R}=R/m^2,
\end{eqnarray}
where $m^2=\rho_m^{(0)}/3 a_0$. 
The equations of motion~\cite{Tseng:2012hz}
can be rewritten as dimensionless equations:
\begin{eqnarray}
\label{E:main eq1-n/dim}
&&\frac{d \tilde{H}}{d\tilde{t}}  = \frac{\tilde{\mu}}{6 \tilde{a}_1}  \tilde{R} - \frac{\tilde{a}_0}{2\tilde{a}_1 \, a^3} -2 \tilde{H}^2,\\
\label{E:main eq2-n/dim}
&& \frac{d\tilde{\Phi}}{d\tilde{t} } =
\frac{\tilde{a}_0}{2 \tilde{a}_1} \left( \tilde{R} - \frac{3}{a^3}
\right) - 3 \tilde{H} \tilde{\Phi} + \frac{1}{3} \tilde{\Phi}^2,\\
\label{E:main eq3-n/dim}
&& \frac{d \tilde{R}}{d\tilde{t}} \simeq
-\frac{2}{3} \tilde{R} \, \tilde{\Phi}, \\
\label{E:main eq4-n/dim}
&&\frac{\tilde{R}}{18} \left( 3\tilde{H} -
\tilde{\Phi} \right) - \frac{\tilde{R}^2}{24} - 3 \tilde{a}_1
\tilde{H}^2 = 3\tilde{a}_0 \left( \frac{1}{a^3} + \frac{\chi}{a^4}
\right),
\end{eqnarray}
respectively,
where $\chi = \rho^{(0)}_r /\rho^{(0)}_m $. In (\ref{E:main
eq3-n/dim}), we have taken the approximation of $R \gg 6\mu/b$ for the
high redshift regime. With the above rescaling, we shall argue that
the lowest order of $\rho_T$ does not exceed $a^{-4}$ in the
following discussion. We formulate the statement as a theorem.

\begin{theorem}
In the high redshift regime $(a \ll 1)$, $\rho_T = O(a^{-4})$.
\end{theorem}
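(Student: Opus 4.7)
My plan is proof by contradiction: assume the Laurent expansion of $\rho_T/\rho_m^{(0)}$ in (\ref{E:rho_exp}) has leading term $A_c\, a^{-c}$ with $c>4$ and $A_c\neq 0$, and derive an inconsistency with the system (\ref{E:main eq1-n/dim})--(\ref{E:main eq4-n/dim}).

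First I would extract the leading behavior of $\tilde H$ from the rescaled Friedmann equation in (\ref{E:H2}), namely $\tilde H^2 = a^{-3}+\chi a^{-4}+\rho_T/\rho_m^{(0)}$. Since $c>4$, the torsion piece dominates the matter contribution at leading order, so $\tilde H \sim \sqrt{A_c}\,a^{-c/2}$. Converting time derivatives via $d/d\tilde t=\tilde H\, a\, d/da$ then yields $d\tilde H/d\tilde t \sim -(c/2)A_c\, a^{-c}$ at leading order.

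The key step---and the one I expect to be the main obstacle---is pinning down the leading behavior of $\tilde R$, since it enters (\ref{E:main eq4-n/dim}) both linearly (through the cross term $(\tilde R/18)(3\tilde H-\tilde\Phi)$) and quadratically. Setting $\tilde R\sim R_0\,a^{-n}$ and using ODE (\ref{E:main eq3-n/dim}) in $a$-derivative form yields $\tilde\Phi\sim(3n/2)\tilde H\sim a^{-c/2}$. A three-way power count in (\ref{E:main eq4-n/dim}) then shows: if $n>c/2$, the $\tilde R^2$ term is isolated at its leading order $a^{-2n}$ with no partner to cancel it, forcing $R_0=0$; if $n<c/2$, the term $-3\tilde a_1\tilde H^2$ is isolated at $a^{-c}$ (which dominates the right-hand side, since $c>4$), forcing $A_c=0$. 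Both outcomes contradict the standing assumption, so the only admissible balance is $n=c/2$.

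With $n=c/2$ established, the linear $\tilde R$-term in (\ref{E:main eq1-n/dim}) is of order $a^{-c/2}$, strictly subdominant to $a^{-c}$, as is the $a^{-3}$ matter term. Matching the $a^{-c}$ coefficients on the two sides of (\ref{E:main eq1-n/dim}) therefore reduces to the algebraic relation $-(c/2)A_c = -2A_c$, which forces $c=4$ and contradicts the hypothesis $c>4$. Hence $\rho_T=O(a^{-4})$, as claimed.
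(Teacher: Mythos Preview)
Your argument is correct, but it follows a genuinely different route from the paper's. The paper expands $\tilde H^2$ as a Laurent series, uses (\ref{E:main eq1-n/dim}) to solve for $\tilde R$ and (\ref{E:main eq3-n/dim}) to solve for $\tilde\Phi$, and then substitutes both into the \emph{second} dynamical equation (\ref{E:main eq2-n/dim}); comparing lowest powers yields the algebraic relation $c^2\,r_{-c}^3\bigl[c^2+(5-\tilde a_0/\tilde\mu)c+4\bigr]=0$, which forces $r_{-c}=0$ for $c\ge 1$ because the bracket has no positive roots when $0<\tilde a_0/\tilde\mu<1$. You instead bypass (\ref{E:main eq2-n/dim}) entirely: a dominant-balance argument in the \emph{constraint} (\ref{E:main eq4-n/dim}) pins the leading exponent of $\tilde R$ at $n=c/2$, after which the $\tilde R$-term in (\ref{E:main eq1-n/dim}) is subleading and matching the $a^{-c}$ coefficients there gives $-(c/2)A_c=-2A_c$, hence $c=4$. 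What this buys you is a cleaner argument that does not hinge on the parameter ratio $\tilde a_0/\tilde\mu$ (you only need $\tilde a_1\neq 0$); what the paper's route buys is a systematic Laurent machinery that is reused immediately afterwards to extract the subleading coefficients (\ref{eq:sol_pow3})--(\ref{eq:sol_pow5}) and obtain $A_3=-\tilde\mu/\tilde a_1$. One small point of presentation: in your trichotomy, ``$R_0=0$'' is not literally a contradiction of the hypothesis $c>4$; it simply says that no exponent $n>c/2$ can be the true leading order of $\tilde R$, so the actual leading exponent satisfies $n\le c/2$, and then the case $n<c/2$ delivers the genuine contradiction $A_c=0$. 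With that reading your balance argument is sound.
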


\begin{proof}

First we expand
\begin{equation}\label{E:H^2}
\tilde{H}^2(t) = \sum^{\infty}_{k=-c} r_k \, a^{k-4},  \qquad ( r_k < \infty   )
\end{equation}
where $c$ is some integer, so that we have
\begin{equation}\label{E:dH}
\frac{d \tilde{H}}{d\tilde{t}} = \sum^{\infty}_{k=-c} \left(
\frac{k-4}{2} \right) r_k \, a^{k-4}.
\end{equation}
Using (\ref{E:main eq1-n/dim}), (\ref{E:main eq3-n/dim}), (\ref{E:H^2}) and (\ref{E:dH}),
we obtain
\begin{eqnarray}
\label{E:tilde_R} \tilde{R} &=& \frac{3 \tilde{a}_1 }{\tilde{\mu}}
\left( \sum^{\infty}_{k=-c} k \cdot r_k \,
a^{k-4} \right) + \frac{3 \tilde{a}_0 }{\tilde{\mu} \, a^3},\\
\label{E:tilde_Phi} \tilde{\Phi} &=& - \frac{3}{2} \tilde{H}  \cdot
\frac{\tilde{a}_1 \left( \sum^{\infty}_{k=-c} k(k-4) r_k \, a^{k-4}
\right) - \frac{3\tilde{a}_0}{a^3} }{\tilde{a}_1 \left(
\sum^{\infty}_{k=-c} k \, r_k \, a^{k-4} \right) +
\frac{\tilde{a}_0}{a^3}}\,.
\end{eqnarray}

Substituting (\ref{E:H^2}), (\ref{E:dH}), (\ref{E:tilde_R}) and
(\ref{E:tilde_Phi}) into (\ref{E:main eq2-n/dim}), and  comparing the
lowest power (requiring $c > -1 $, otherwise losing its leading
position) of $a$ in the high redshift, $a \ll 1$, we derive the
following relation
\begin{equation}\label{E:c}
\left( \frac{\tilde{a}_1}{\tilde{\mu}} \right)^2 c^2 \cdot r_{-c}^3
\left[ c^2 + (5-\frac{\tilde{a}_0}{\tilde{\mu}}) c + 4  \right] =0,
\end{equation}
which leads to $r_{-c}=0$ if  $c \geq 1$
as $0 <\tilde{a}_0/\tilde{\mu} < 1$ and $c^2 +(5-\frac{\tilde{a}_0}{\tilde{\mu}}) c + 4 \neq 0$.
This is equivalent to say that (\ref{E:H^2}) has
the form
\begin{equation}
\tilde{H}^2 = \frac{r_0}{a^4} + \frac{r_1}{a^3} + \frac{r_2}{a^2} +
\frac{r_3}{a} + r_4 + \cdots
\end{equation}

Finally, we achieve our claim from (\ref{E:H2}) that
\begin{eqnarray}
\label{E:rho_T}
\frac{\rho_T}{\rho_m^{(0)}} &= &- \left( \frac{\chi}{a^4} + \frac{1}{a^4} \right) + \tilde{H}^2
\nonumber\\
      & &= -\left( \frac{\chi}{a^4} + \frac{1}{a^3} \right) + \left( \frac{r_0}{a^4} + \frac{r_1}{a^3} + \frac{r_2}{a^2} + \frac{r_3}{a} + r_4 + \cdots
       \right) = O\left(\frac{1}{a^4}\right).
\end{eqnarray}
Note that the last equality follows since
$r_0\neq \chi$, which will be explained later. 
This is the end of the proof.
\end{proof}

We now write the expansion in (\ref{E:rho_T}), by
the theorem above, simply as
\begin{equation}\label{eq:rho_T}
\frac{\rho_T}{\rho_m^{(0)}} = \sum_{k=-4}^{\infty} A_{-k} \, a^k\,.
\end{equation}
We shall only take first few dominating terms for
a sufficient demonstration. By the procedure in the proof of
the theorem, we can as well compare terms of various orders to yield
the following relations,
\begin{eqnarray}
\label{eq:sol_pow3}
O(a^{-10})&:& 3 \left( A_4 + \chi \right) \left(1+ \frac{\tilde{a}_1}{\tilde{\mu}} \, A_3  \right)^2 =0, \\
\label{eq:sol_pow4}
O(a^{-9})&:& 2 \left( 1+ \frac{\tilde{a}_1}{\tilde{\mu}} A_3 \right) \left[ \frac{\tilde{a}_0}{\tilde{\mu}} \left( 1+ \frac{\tilde{a}_1}{\tilde{\mu}} A_{3} \right) A_{3} + \frac{4 \tilde{a}_1}{\tilde{\mu}} \left( A_4 + \chi \right) A_{2} \right] =0,\\
\label{eq:sol_pow5}
O(a^{-8})&:&  \left( 1+ \frac{\tilde{a}_1}{\tilde{\mu}} A_3 \right)  \left[ 4 \frac{\tilde{a}_0}{\tilde{\mu}} \left( 1+ 3
\frac{\tilde{a}_1}{\tilde{\mu}} A_3 \right) A_2 \right. \nonumber\\
&&\left.- \left( 3 A_2 +\frac{\tilde{a}_1}{\tilde{\mu}} \left( A_2 \left( 2 + 5 A_3 \right)
-18 A_1 \left( A_4+ \chi \right) \right) \right) \right] = 0.
\end{eqnarray}
 From  (\ref{eq:sol_pow3}), (\ref{eq:sol_pow4}) and
(\ref{eq:sol_pow5}), one concludes a relation,
\begin{equation}\label{eq:sol_a3}
A_3 = -\frac{\tilde{\mu}}{\tilde{a}_1} = - \frac{\left( \tilde{a}_0 + \tilde{a}_1 \right)}{ \tilde{a}_1} <-1,
\end{equation}
with $A_1,A_2$ and $A_4$ left as arbitrary constants to be
determined by initial conditions and (\ref{E:main eq4-n/dim}).
 Note that (\ref{eq:sol_a3}) implies $r_1= -\tilde{a}_0/ \tilde{a}_1 < 0$ 
 in (\ref{E:H^2}). However, due to the
observational data that  $a=1$ at the current stage, the radiation
density is much smaller than the dust density ($\chi \ll 1$), whereas the
torsion density is the same order as the dust density, as seen from
(\ref{E:rho_T}),
\begin{equation}
\frac{\rho^{(0)}_T}{\rho^{(0)}_m} = \left[ (r_0 - \chi) + r_2 +
\cdots \right] - ( 1+ |r_1|) \simeq O(1)\,.
\end{equation}
Subsequently, we have that $ [(r_0 - \chi) + r_2 + \cdots] \leq \max\{ O(1),
O(|r_1|) \}$, along with the assumption $r_k < \infty $ for each
$k$. As a result, we conclude that $r_k$, for all $k \neq 1$, should not be too large,
which forbids the possibility $r_0 = \chi$.
This argument shows the validity of the last equality in
(\ref{E:rho_T}) with the non-vanishing $O(1/a^4)$ coefficient.

From (\ref{E:w_T}), 
via the continuity equation~\cite{Tseng:2012hz}, we obtain
\begin{eqnarray}
\label{eq:sol_wt} w_T=-1-\frac{\rho^{\prime}_T}{3\rho_T} \simeq -1 +
\frac{1}{3}  \left( \frac{4 A_4 a^{-4} + 3 A_3 a^{-3}}{A_4 a^{-4} +
A_3 a^{-3}} \right) \simeq \frac{1}{3}\left( 1 - \frac{A_3}{A_4}a
\right),
\end{eqnarray}
where the prime $``\prime''$ stands for  $d/d\ln a$ and we have used
(\ref{eq:rho_T}) for $a \ll 1$.

\subsection{Numerical computations}\label{sec:Numerical}

In this subsection, we perform numerical computations to
support the analysis above. As an illustration, we take the parameters $\tilde{a}_0=2$
and $\tilde{a}_1=1$ and initial conditions
\[
\tilde{H}(z=0)=\tilde{H}_0=2, \qquad \tilde{R}(z=0) =
\tilde{R}_0=14,
\]
and show the evolutions of $w_T$, $\tilde{\Phi}$, and $\tilde{R}$ in
Figs.~\ref{fg:1}, \ref{fg:2}a and \ref{fg:2}b, respectively.
\begin{center}
\begin{figure}[tbp]
\begin{tabular}{ll}
\begin{minipage}{80mm}
\begin{center}
\unitlength=1mm \resizebox{!}{6.5cm}{\includegraphics{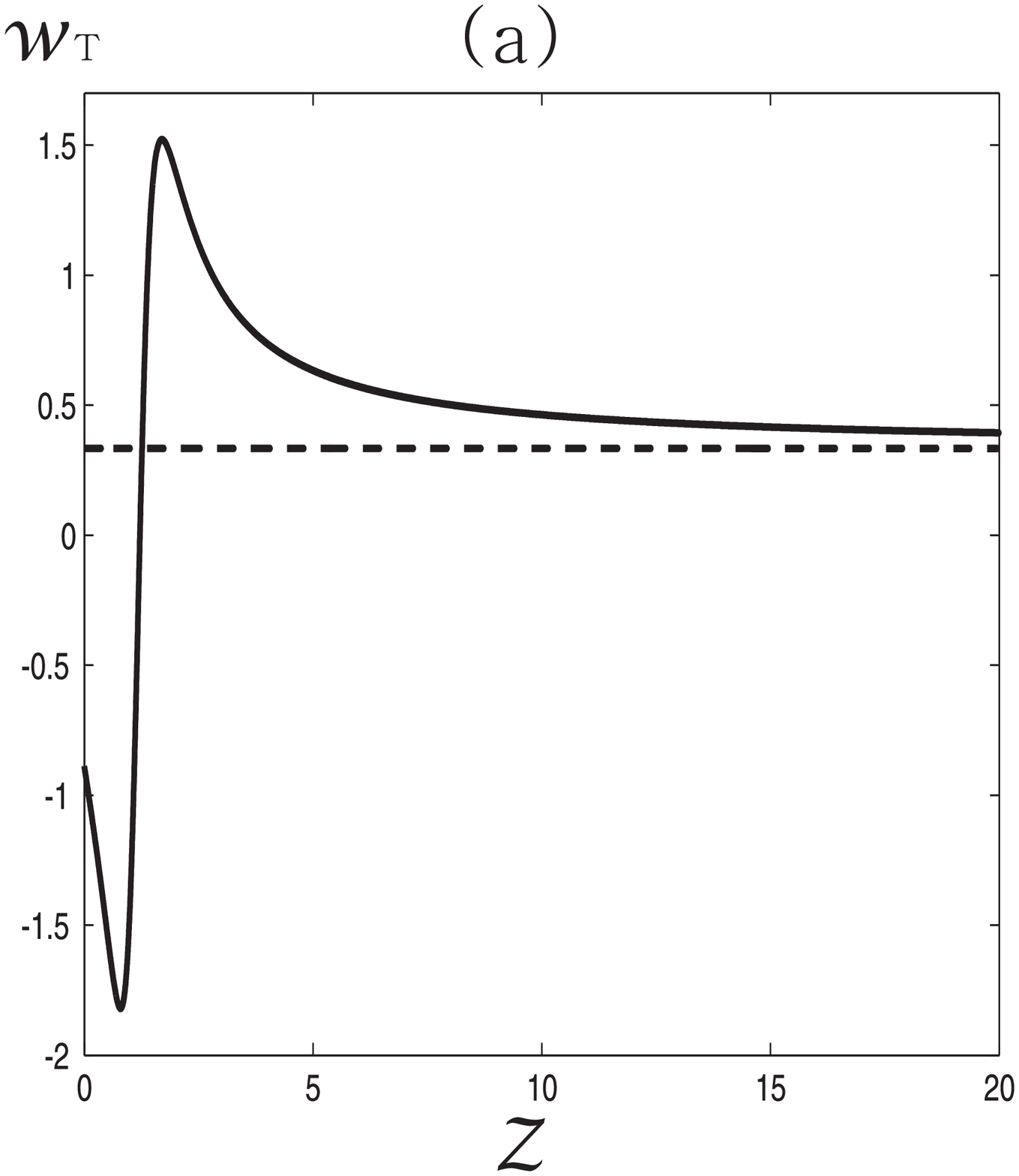}}
\end{center}
\end{minipage}
&
\begin{minipage}{80mm}
\begin{center}
\unitlength=1mm
\resizebox{!}{6.5cm}{\includegraphics{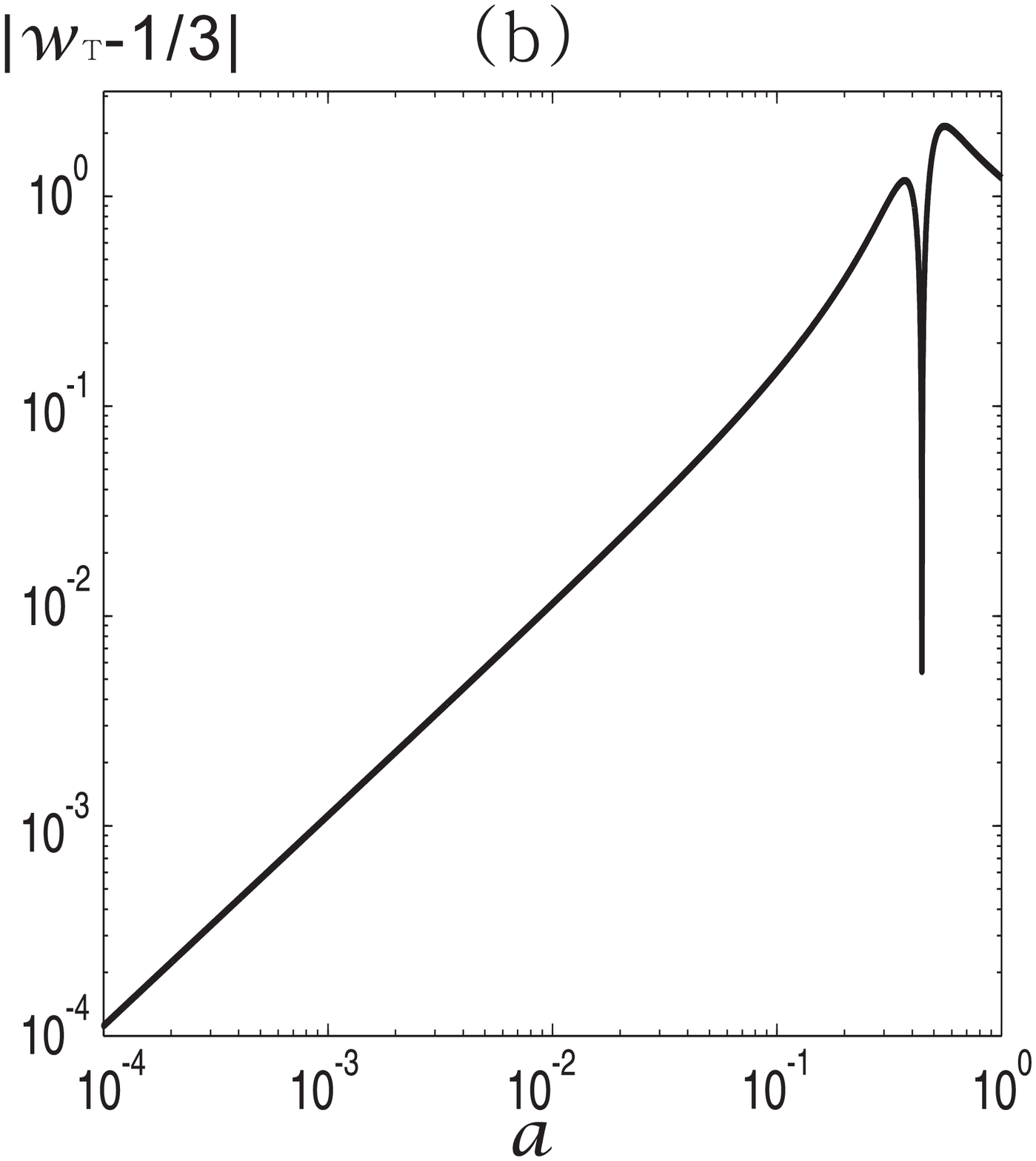}}
\end{center}
\end{minipage}\\[5mm]
\end{tabular}
\caption{Evolutions of (a)  $w_T$ and (b) $\lvert w_T-1/3 \rvert$ as
function of the redshift $z$  and the scale parameter $a$,
respectively, where the parameters and initial conditions are chosen
as $\tilde{a}_0=2$, $\tilde{a}_1=1$, $\tilde{H}_0=2$, $\tilde{R}_0=14$
and  $\chi = \rho^{(0)}_r /\rho^{(0)}_m = 3.1 \times 10^{-4}$.}
\label{fg:1}
\end{figure}
\end{center}

In Fig.~\ref{fg:1}a, we demonstrate the EoS of  torsion
 as a function of the redshift $z$. As seen from the figure,
in the high redshift regime $w_T$ approaches $1/3$, which indeed
shows an asymptotic behavior.
Fig.~\ref{fg:1}b indicates that $\lvert w_T-1/3 \rvert$
approximates a straight line in the scale factor $a$ in the
$\log$-scaled coordinate since the
slope in the $\log$-scaled coordinates is nearly $1$.
The singularity in the interval $[0.1,1]$
corresponds to the crossing $1/3$ of $w_T$.
Thus, the numerical results concur with our semi-analytical approximation
in (\ref{eq:sol_wt}).
In Fig.~\ref{fg:2}, we observe that the behaviors of $\tilde{R}$ and
$\tilde{\Phi} \propto 1/a^2$ in the high redshift regime are consistent
with the results in
(\ref{E:tilde_R}) and (\ref{E:tilde_Phi}), given by
\begin{eqnarray}
\label{eq:sol_R1}
&&\tilde{R} \simeq  \frac{2 \tilde{a}_1}{\tilde{\mu}} \, A_2 \, a^{-2}, \\
\label{eq:sol_phi1} &&\tilde{\Phi} \simeq 3\tilde{H} \propto a^{-2},
\end{eqnarray}
respectively, where $\tilde{H}^2 \simeq \left(\chi + A_4\right) a^{-4}$ from (\ref{E:H^2}).
Note that from (\ref{eq:sol_R1}), the behavior of the affine
curvature $\tilde{R}$ is highly different from that of the
 Riemannian scalar  curvature $\bar{R} = -
\mathcal{T}/a_0 = \rho_m/a_0$, which is proportional to $1/a^3$ in
both matter (dust) and radiation dominated eras.
\begin{center}
\begin{figure}[tbp]
\begin{tabular}{ll}
\begin{minipage}{80mm}
\begin{center}
\unitlength=1mm
\resizebox{!}{6.5cm}{\includegraphics{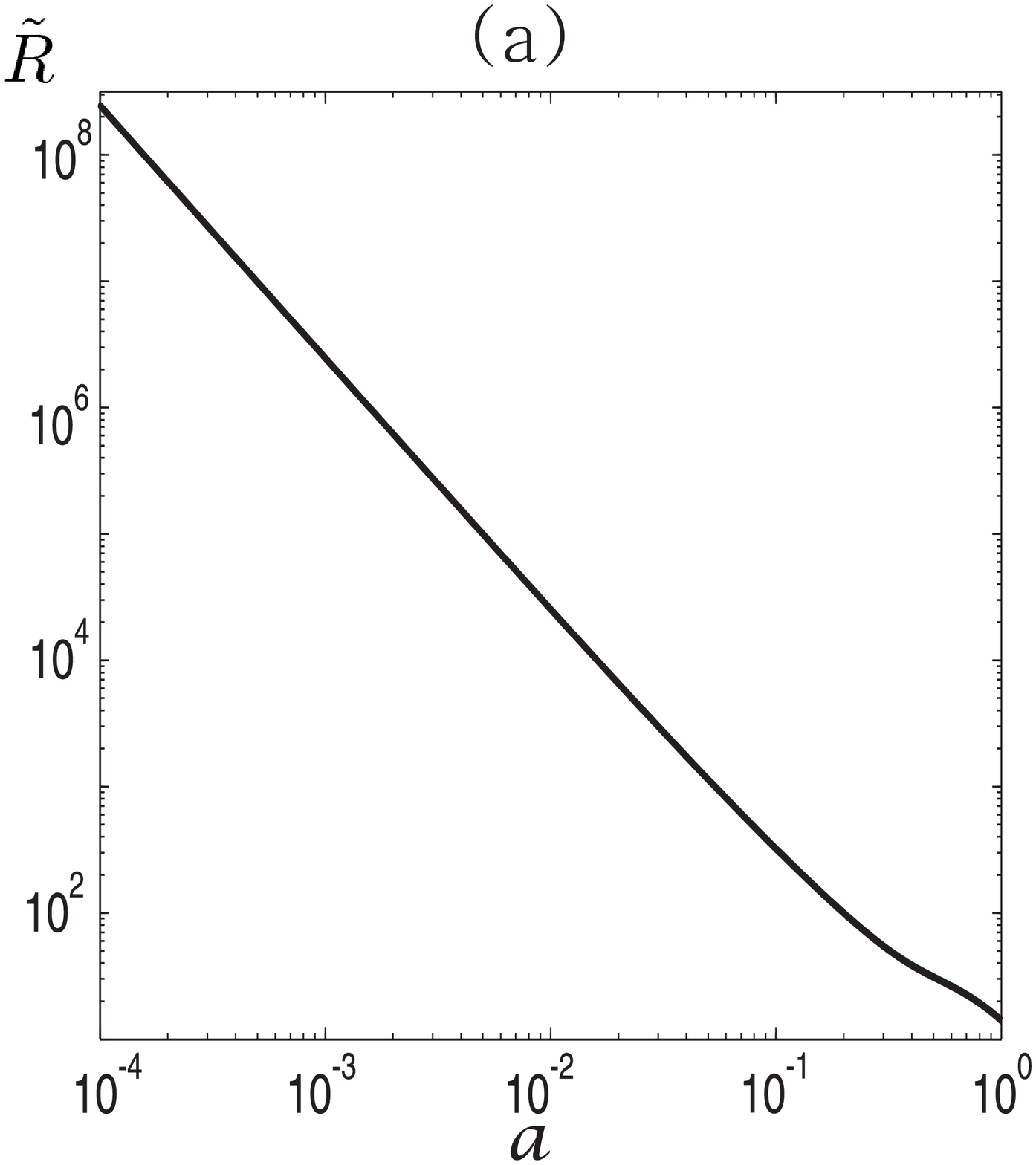}}
\end{center}
\end{minipage}
&
\begin{minipage}{80mm}
\begin{center}
\unitlength=1mm
\resizebox{!}{6.5cm}{\includegraphics{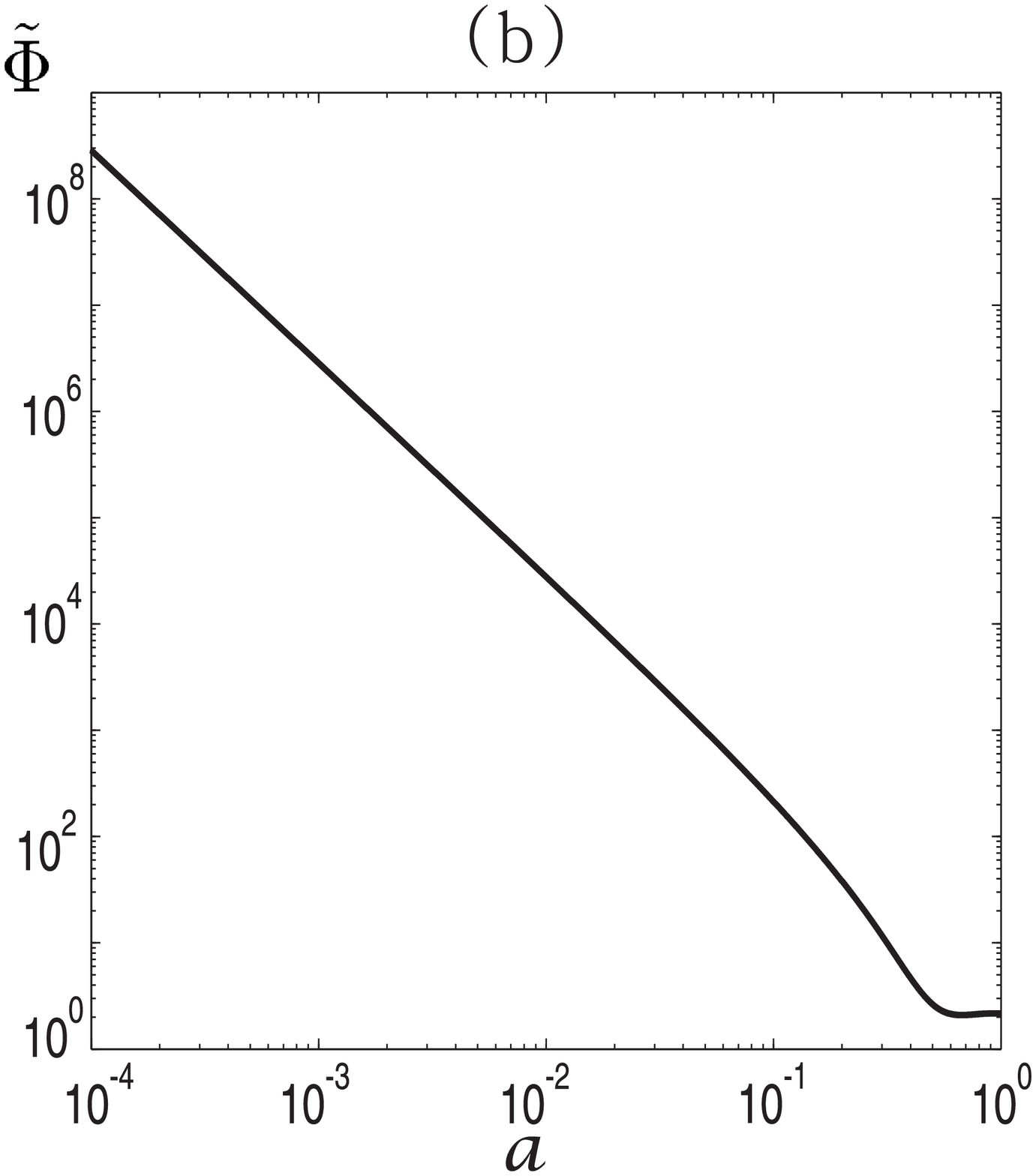}}
\end{center}
\end{minipage}\\[5mm]
\end{tabular}
\caption{ Evolutions of (a) the rescaled affine curvature
$\tilde{R}$ and (b) the torsion $\tilde{\Phi}$  as functions of the scale parameter
$a$ in the $\log$ scale with the parameters 
and initial conditions taken to be the same as Fig.~\ref{fg:1}.}
\label{fg:2}
\end{figure}
\end{center}

\section{Conclusions}\label{sec:conclusion}
We have investigated the asymptotic  evolution behaviors of the
scalar-torsion mode in PGT. The EoS of the torsion density has an
early time stable point $w_T(z \gg 0) \rightarrow 1/3$. This
behavior can be estimated through the semi-analytical solution via
the Laurent expansion in the scale factor $a(t)$ for the torsion density
$\rho_T$.
We have shown that there indeed exists the lowest
degree of $\rho_T$ in its expansion by $a^{-4}$, corresponding to
the radiation-like behavior in the high redshift regime. By the
comparison of the next leading-order term of $a^{-3}$ in the field
equations, we have extracted the coefficient $A_3 = -\mu / a_1$, which results
in the vanishing of the $a^{-3}$ term in the affine curvature $R$,
such that $R$ is only proportional to $a^{-2}$, consistent with the
numerical demonstration.\\

\noindent
{\bf Acknowledgments}

This work was partially supported by National Center for Theoretical
Science and  National Science Council (NSC-98-2112-M-007-008-MY3 and
NSC-101-2112-M-007-006-MY3) of R.O.C.

\end{document}